\newcommand{\ket}[1]{\ensuremath{\left|{#1}\right\rangle}}
\newtheorem{theorem}{Theorem}
\newtheorem{lemma}{Lemma}
\newtheorem{fact}{Fact}
\theoremstyle{definition}
\newtheorem{example}{Example}
\newtheorem{problem}{Problem}
\theoremstyle{remark}
\begin{document}

\title{Existence of Universal Entangler}

\author{Jianxin Chen}
\email{chenkenshin@gmail.com}
\affiliation{ State Key Laboratory of Intelligent Technology and
Systems, Department of Computer Science and Technology, Tsinghua
University, Beijing 100084, China }
\author{Runyao Duan}
\email{dry@tsinghua.edu.cn}
\affiliation{ State Key Laboratory of Intelligent Technology and
Systems, Department of Computer Science and Technology, Tsinghua
University, Beijing 100084, China }
\author{Zhengfeng Ji}
\email{jizhengfeng98@mails.thu.edu.cn}
\affiliation{ State Key Laboratory of Intelligent Technology and
Systems, Department of Computer Science and Technology, Tsinghua
University, Beijing 100084, China }
\author{Mingsheng Ying}
\email{yingmsh@tsinghua.edu.cn}
\affiliation{ State Key Laboratory of Intelligent Technology and
Systems, Department of Computer Science and Technology, Tsinghua
University, Beijing 100084, China }
\author{Jun Yu}
\email{majunyu@ust.hk} \affiliation{ Department of Mathematics, Hong
Kong University of Science and Technology, Clear Water Bay, Kowloon,
Hong Kong }

\date{\today}

\begin{abstract}
A gate is called an entangler if it transforms some (pure) product
states to entangled states. A universal entangler is a gate which
transforms all product states to entangled states. In practice, a
universal entangler is a very powerful device for generating
entanglements, and thus provides important physical resources for
accomplishing many tasks in quantum computing and quantum
information. This Letter demonstrates that a universal entangler always
exists except for a degenerated case. Nevertheless, the problem how
to find a universal entangler remains open.
\end{abstract}

\pacs{02.20.-a, 02.20.Uw, 03.65.Ud, 03.67.Lx}

\maketitle

\paragraph{Introduction.}

It is a common sense in the quantum computation and quantum
information community that entanglement is an extremely important
kind of physical resources. How to generate this kind of resources therefore becomes an important problem. One possibility one may naturally conceive is to generate entanglement from
product states which can be prepared spatially separately. If a gate
can transform some product states to entangled states, then we call
it an entangler. For some given product states, it is
not difficult to find an entangler that maps them to entangled states. Here we consider a more challenging problem: Does there exist some entangler that can transform \textit{all} product states to entangled states. We call such an
entangler a \textit{universal} entangler~\cite{Buzek00,Buzek04}. One may even wonder at the existence of
universal entanglers.

Formally, the problem of existence of universal entanglers in
bipartite systems can be stated as follows:

\begin{problem}\label{or-prob}
Suppose Alice and Bob have two quantum systems with state spaces $\mathcal{H^A}$ and $\mathcal{H^B}$ respectively. Whether there exists a unitary operator $U$ acting on $\mathcal{H^A\otimes H^B}$ such that $U(\ket{\phi}\otimes \ket{\psi})$ is always entangled for any $\ket{\phi}\in \mathcal{H^A}$ and $\ket{\psi}\in \mathcal{H^B}$?
\end{problem}

The purpose of this letter is to present a complete solution to the above problem. We have our main theorem as follows:

\begin{theorem}\label{main}
Given a bipartite quantum system $\mathcal{H_m\otimes H_n}$, where $\mathcal{H_m}$ and $\mathcal{H_n}$ are Hilbert space with dimension m and n respectively, then there exists a unitary operator U which
maps every product state of this system to an entangled bipartite
state if and only if $\min(m,n)\geq 3$ and $(m,n)\neq (3,3)$.
\end{theorem}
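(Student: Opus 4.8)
\emph{Proof plan.} The natural framework is complex projective geometry. Identify the pure states of $\mathcal{H}_m\otimes\mathcal{H}_n$ with the points of $\mathbb{CP}^{mn-1}$, and let $\Sigma\subset\mathbb{CP}^{mn-1}$ be the Segre variety of product states, an irreducible projective variety of complex dimension $(m-1)+(n-1)=m+n-2$. A unitary $U$ acts as a linear automorphism of $\mathbb{CP}^{mn-1}$, so $U\cdot\Sigma$ is again an irreducible projective variety of complex dimension $m+n-2$, and $U$ fails to be a universal entangler precisely when $\Sigma\cap(U\cdot\Sigma)\neq\varnothing$. Both directions then reduce to a dimension count whose crossover sits exactly at $(m-2)(n-2)=1$; and indeed $(m-2)(n-2)\ge 2$ holds if and only if $\min(m,n)\ge 3$ and $(m,n)\neq(3,3)$, while $(m-2)(n-2)\le 1$ covers all remaining cases (including the degenerate ones $\min(m,n)\le 1$, where every state is already a product state).

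\emph{Necessity.} Suppose $(m-2)(n-2)\le 1$, equivalently $2(m+n-2)\ge mn-1$. Then $\dim\Sigma+\dim(U\cdot\Sigma)=2(m+n-2)\ge mn-1=\dim\mathbb{CP}^{mn-1}$, so the classical projective dimension theorem forces $\Sigma\cap(U\cdot\Sigma)\neq\varnothing$ for \emph{every} unitary $U$. Hence every $U$ maps some product state to a product state, so no universal entangler exists.

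\emph{Sufficiency.} Suppose $(m-2)(n-2)\ge 2$. I would study the incidence set $W=\{(p,q,U)\in\Sigma\times\Sigma\times U(mn):U\cdot p=q\}$, a closed, hence compact, subset of $\Sigma\times\Sigma\times U(mn)$. The first step is to identify its structure: the projection $W\to\Sigma\times\Sigma$ is a locally trivial fibre bundle whose fibre over $(p,q)$ is a coset of the stabilizer of a line in $\mathbb{C}^{mn}$, diffeomorphic to $U(1)\times U(mn-1)$; consequently $W$ is a compact smooth manifold of real dimension $\dim_{\mathbb{R}}W=4(m+n-2)+1+(mn-1)^2$. A short computation shows $\dim_{\mathbb{R}}W<(mn)^2=\dim U(mn)$ exactly when $(m-2)(n-2)\ge 2$. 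The second step is then immediate: the set of non-entanglers is the image of $W$ under the smooth projection $(p,q,U)\mapsto U$, which therefore has measure zero in $U(mn)$ (by Sard's theorem, or because it is a lower-dimensional semialgebraic set). Its complement is nonempty, so a universal entangler exists.

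\emph{Main obstacle.} The conceptual content is light once the projective framework is set up; the real work is confined to the sufficiency half. One must establish the fibre-bundle/manifold structure of $W$ carefully enough that the implication ``lower dimension $\Rightarrow$ image has measure zero'' is rigorous, and one must compute the fibre dimension and the resulting inequality exactly right so that the threshold coincides with the $(m-2)(n-2)\le 1$ threshold of the necessity argument --- a slip here would misplace the borderline case $(m,n)=(3,3)$. On the necessity side the only thing worth emphasizing is that $U\cdot\Sigma$ really is a complex projective variety, which is true only because $U$ acts \emph{linearly} on $\mathbb{C}^{mn}$.
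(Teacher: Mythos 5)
Your proposal is correct, and its necessity half is exactly the paper's argument: both apply the projective dimension theorem to $\Sigma$ and $U\cdot\Sigma$ inside $\mathbb{P}^{mn-1}$, with the threshold $2(m+n-2)-(mn-1)=1-(m-2)(n-2)\ge 0$ characterizing the degenerate cases. For sufficiency you take a genuinely different route. The paper complexifies: it studies the bad set $X=\{\Phi\in GL(mn,\mathbb{C}):\Phi(Z)\cap Z\neq\emptyset\}$, bounds $\dim\overline{X}\le m^2n^2-(m-2)(n-2)+1$ via an incidence variety $F^{-1}(Z)\subset GL(mn,\mathbb{C})\times Z$ and dimension inequalities for dominant morphisms, and then uses the key lemma that $U(mn)$ is Zariski dense in $GL(mn,\mathbb{C})$: if every unitary were bad, then $GL(mn,\mathbb{C})=\overline{U(mn)}\subseteq\overline{X}$, contradicting the dimension bound. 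You instead stay inside the compact real group: your $W\subset\Sigma\times\Sigma\times U(mn)$ is the real-unitary analogue of the paper's $F^{-1}(Z)$, your fibre-dimension count $4(m+n-2)+1+(mn-1)^2<(mn)^2\iff(m-2)(n-2)\ge 2$ is exactly right, and the easy case of Sard's theorem (a smooth image of a compact manifold of strictly smaller dimension has measure zero, hence cannot be all of $U(mn)$) replaces Zariski density. The details you flag as the main obstacle are standard: $\Sigma$ is a smooth compact submanifold because the Segre map is a closed embedding of $\mathbb{P}^{m-1}\times\mathbb{P}^{n-1}$, and $W$ is a smooth manifold either via your bundle description (local triviality from local sections of the fibration $U(mn)\to\mathbb{P}^{mn-1}$, whose fibre is the line stabilizer $U(1)\times U(mn-1)$) or, more quickly, as the preimage of $\Sigma$ under the submersion $(p,U)\mapsto U\cdot p$ from $\Sigma\times U(mn)$, which gives the same dimension. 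Comparing what each buys: the paper's algebraic route avoids all smoothness and local-triviality checks but needs the density lemma, the dominant-morphism machinery, and an irreducibility subtlety it defers to a separate note; your route is more elementary in tools and proves strictly more, namely that the non-universal-entanglers form a closed Haar-null subset of $U(mn)$, so a Haar-random unitary is almost surely a universal entangler --- precisely the conjecture stated (and left unproved) in the paper's conclusion.
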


There are many literatures and different approaches which are potentially relevant to the topic of entanglers. Besides universal entangler, Zhang \textit{et al.} discussed \textit{perfect} entanglers which are defined as the unitary operations that can generate
maximal entangled states from some initially separable
states~\cite{Zhang04prl,Zhang04pra,Rezakhani04}. Bu\v zek \textit{et al.}~\cite{Buzek00} considered the entangler that entangles a qubit in unknown state with a qubit in a reference, also they proved the nonexistence of universal entangler for qubits, which answers a special case of our problem. Furthermore, nonexistence of universal entangler in $2\otimes n$ bipartite system
can be derived straightforwardly from Parthasarathy's recent work on
the maximal dimension of completely entangled
subspace~\cite{Parthasarathy04}, which is tightly connected to the
unextendible product bases introduced by Bennett \textit{et al.}
~\cite{Bennett99,DiVincenzo03,Bhat04}. 

But the proof of the above
theorem requires some mathematical results from
basic algebraic geometry. We will first review some basic notions in algebraic geometry in the next
section. To make a clear presentation, Theorem~\ref{main} is
divided into two parts, namely Theorems~\ref{thm:degenerate}
and~\ref{thm:general} below, and detailed proofs of them are given.
Finally, a brief conclusion is drawn and some open problems are
proposed. 

\vspace*{3ex}

\paragraph{Preliminaries.}

For the convenience of the reader, we recall some basic definitions in algebraic geometry~\cite{artin91, hartshorne77}.

The \textit{$n\times n$ general linear group} and the \textit{$n\times n$ unitary group} are denoted by $GL(n,\mathbb{C})$ and $U(n)$, respectively.

An \textit{affine $n$-space}, denoted by $A^n$, is the set of all $n$-tuples of complex numbers. An element of $A^n$ is called a point, and if point $P=(a_1,a_2,\cdots,a_n)$ with $a_i\in \mathbb{C}$, then the $a_i$'s are called the coordinates of $P$.

The \textit{polynomial ring in $n$ variables}, denoted by $\mathbb{C}[x_1,x_2,\cdots,x_n]$, is the set of polynomials in $n$ variables with coefficients in a ring.

A subset $Y$ of $A^n$ is an \textit{algebraic set} if it is the common zeros of a finite set of polynomials $f_1,f_2,\cdots,f_r$ with $f_i\in \mathbb{C}[x_1,x_2,\cdots,x_n]$ for $1\leq i\leq r$, which is also denoted by $Z(f_1,f_2,\cdots,f_r)$.

It is not hard to check that the union of a finite number of
algebraic sets is an algebraic set, and the intersection of any
family of algebraic sets is again an algebraic set. Thus by taking
the open subsets to be the complements of algebraic sets, we can define a
topology, called the \textit{Zariski topology} on $A^n$.

A nonempty subset $Y$ of a topological space $X$ is called \textit{irreducible} if it cannot be expressed as the union $Y=Y_1\cup Y_2$ of two proper closed subsets $Y_1$, $Y_2$. The empty set is not considered to be irreducible.

Let $X$, $Y$ be two topological spaces, then we have two useful facts:
\begin{fact}
If $X$ is irreducible and $F: X\rightarrow Y$
be a continuous function, then $F(X)$ with induced topology is also
irreducible.
\end{fact}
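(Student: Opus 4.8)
The plan is a one-step argument by contradiction straight from the stated definition of irreducibility, pulling closed sets back along $F$. The only ingredients needed are: continuity of $F$; the convention (recalled above) that irreducible spaces are nonempty; and the standard description of the subspace topology on $F(X)$, namely that its closed sets are exactly the sets $Z\cap F(X)$ with $Z$ closed in $Y$ (dually, its open sets are the $V\cap F(X)$ with $V$ open in $Y$). No further hypothesis on $F$ or on $Y$ enters.

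Here is how I would run it. Since $X$ is irreducible it is nonempty, hence $F(X)$ is nonempty. Suppose, for contradiction, that $F(X)$ is reducible, so $F(X)=C_1\cup C_2$ for some proper closed subsets $C_1,C_2$ of $F(X)$. Pick closed sets $Z_1,Z_2\subseteq Y$ with $C_i=Z_i\cap F(X)$. Then $F(X)\subseteq Z_1\cup Z_2$, so $X=F^{-1}(Z_1)\cup F^{-1}(Z_2)$, and each $F^{-1}(Z_i)$ is closed in $X$ by continuity of $F$. Since $X$ cannot be written as the union of two proper closed subsets, after relabelling we get $F^{-1}(Z_1)=X$, i.e.\ $F(X)\subseteq Z_1$, whence $C_1=Z_1\cap F(X)=F(X)$ --- contradicting that $C_1$ is a proper subset of $F(X)$. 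Hence $F(X)$ admits no such decomposition and is irreducible. (The dual open-set version is just as quick: for nonempty open $U_1,U_2\subseteq F(X)$ write $U_i=V_i\cap F(X)$ with $V_i$ open in $Y$, note $F^{-1}(V_i)$ is nonempty and open in $X$, use irreducibility of $X$ to produce a point $x\in F^{-1}(V_1)\cap F^{-1}(V_2)$, and observe $F(x)\in U_1\cap U_2$.)

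I do not expect any genuine obstacle: this is a routine point-set topology fact, and the argument uses nothing beyond continuity and the definitions. The only things to keep track of are that $F(X)$ is actually nonempty --- so that it is eligible to be called irreducible under the stated convention that the empty set is excluded --- and that one consistently works with the subspace topology on $F(X)$ when translating between its relatively closed (or open) subsets and closed (or open) subsets of the ambient space $Y$.
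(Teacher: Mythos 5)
Your argument is correct and is the standard one: pull the two relatively closed pieces of $F(X)$ back along $F$, use irreducibility of $X$ to force one preimage to be all of $X$, and conclude that the corresponding piece is all of $F(X)$; the nonemptiness bookkeeping is also handled properly. The paper itself states this fact without proof (it is a routine exercise from the cited algebraic-geometry references), so your write-up simply supplies the omitted standard verification and there is no divergence of method to report.
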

\begin{fact}
Let $Y \subset X$ be a subset, if $X$ is irreducible and $Y$ is open, then $Y$ is irreducible; if $Y$ is irreducible and dense in $X$,  then $X$ is irreducible.
\end{fact}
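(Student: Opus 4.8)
The plan is to prove both implications of Fact~2 from a single reformulation of irreducibility stated in terms of open sets, which is far more convenient to manipulate than the closed-set definition given above. First I would establish the auxiliary claim that a nonempty topological space $X$ is irreducible if and only if any two nonempty open subsets of $X$ meet, i.e. have nonempty intersection. This is a direct translation by complementation: if $U_1,U_2$ are nonempty open with $U_1\cap U_2=\emptyset$, then $X=(X\setminus U_1)\cup(X\setminus U_2)$ writes $X$ as a union of two proper closed subsets; conversely, a decomposition $X=C_1\cup C_2$ into proper closed sets produces the two nonempty open sets $X\setminus C_1$ and $X\setminus C_2$, whose intersection $X\setminus(C_1\cup C_2)$ is empty. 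The same equivalence applies verbatim to any subspace carrying its induced topology.

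With this criterion available, the first half is immediate. Assuming $Y\subseteq X$ is open and nonempty and $X$ is irreducible, I would take two arbitrary nonempty open subsets $U_1,U_2$ of $Y$; since $Y$ is open in $X$, these are also open in $X$, so irreducibility of $X$ gives $U_1\cap U_2\neq\emptyset$, and $Y$ meets the criterion. Here one must record that the statement tacitly assumes $Y$ nonempty, since the empty set is excluded from being irreducible by the definition recalled above.

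For the second half, suppose $Y$ is irreducible and dense in $X$, and let $V_1,V_2$ be nonempty open subsets of $X$. Density of $Y$ means precisely that $Y$ intersects every nonempty open subset of $X$, so $Y\cap V_1$ and $Y\cap V_2$ are nonempty, and each is open in the subspace topology of $Y$. Applying the criterion to the irreducible space $Y$ yields $(Y\cap V_1)\cap(Y\cap V_2)\neq\emptyset$, and since this set lies inside $V_1\cap V_2$, the latter is nonempty; hence $X$ is irreducible.

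These are standard point-set facts, so I do not anticipate a genuine obstacle, only bookkeeping. The points requiring care are to stay consistent about whether \emph{open} and \emph{closed} refer to $Y$ or to the ambient $X$ --- the open-set criterion removes most of this friction, but the density step still relies on the fact that the open sets of $Y$ are exactly the traces $Y\cap V$ of open sets $V$ of $X$ --- and to keep track of the nonemptiness hypotheses so that the criterion is only ever invoked on genuinely nonempty open sets. The payoff of the reformulation is that it bypasses the more awkward closure manipulations that a direct argument from the closed-set definition would otherwise force.
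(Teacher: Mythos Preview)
Your argument is correct and entirely standard: the open-set reformulation of irreducibility (any two nonempty open subsets meet) is the right tool, and both implications follow in a couple of lines once it is in hand. The only caveat you already flagged yourself, namely that the first implication tacitly requires $Y\neq\emptyset$.

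As for comparison: the paper does not actually prove Fact~2. It is simply recorded as one of ``two useful facts'' in the preliminaries, with the general reference to Hartshorne's \emph{Algebraic Geometry} standing in for a proof. So there is no alternative approach to compare against; your write-up supplies exactly the kind of routine verification the paper omits.
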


An \textit{affine algebraic variety} is an irreducible closed subset
of some $A^n$, with respect to the induced topology.

We define \textit{projective n-space}, denoted by $\mathbb{P}^n$, to be the set of
equivalence classes of $(n+1)-$tuples $(a_0,\cdots,a_n)$ of complex numbers, not all zero, under the equivalence relation given by
$(a_0,\cdots,a_n)\sim(\lambda a_0,\cdots,\lambda a_n)$ for all
$\lambda \in \mathbb{C}$, $\lambda\neq 0$.

A notion of algebraic variety may also be introduced in projective
spaces, called projective algebraic variety: a subset $Y$ of $\mathbb{P}^n$
is an \textit{algebraic set} if it is the common zeros of a finite
set of homogeneous polynomials $f_1,f_2,\cdots,f_r$ with $f_i\in
\mathbb{C}[x_0,x_1,\cdots,x_n]$ for $1\leq i\leq r$. We call open subsets of irreducible projective varieties as quasi-projective varieties.

We will mainly use the following two varieties:

\begin{example}\label{example:A1}
 $A^1$ is irreducible, because its only proper closed subsets are finite, yet it is itself infinite.
 This fact is somewhat trivial. If $T$ is an algebraic subset of $A^1$, then we should say,
 there is a finite set $F$ of polynomials over $\mathbb{C}$, such that $T=Z(F)$. For all $f\in \mathbb{C}[x]$, if
 $degree(f)$ is 0 or $f$ is a constant function, then $T$
 should be the degenerate subset of $\mathbb{C}$, i.e.,empty set $\emptyset$ or $\mathbb{C}$. Otherwise,
 we can choose a $f \in \mathbb{C}[x]$, such that $f$ is not a constant function, and $degree(f)$ is no less than 1.
 From the fundamental theorem of algebra, we know the number of its roots is at most its degree,
 so its solution set should be a finite set. Thus, a subset of the solution set must be finite too,
 and we proved that $T$ is a finite subset of $\mathbb{C}$.
\end{example}

\begin{example}\label{example:segre}
The \textit{Segre embedding} is defined as the map:
\begin{displaymath}\sigma: \mathbb{P}^{m-1} \times \mathbb{P}^{n-1} \rightarrow
\mathbb{P}^{mn-1}\end{displaymath} taking a pair of points $([x],[y])\in
\mathbb{P}^{m-1}\times \mathbb{P}^{n-1}$ to their product
\begin{eqnarray}
\sigma: ([x_0:x_1:\cdots:x_{m-1}],[y_0:y_1:\cdots:y_{n-1}])\nonumber \\
\longmapsto [x_0y_0:x_0y_1:\cdots:x_{m-1}y_{n-1}]\nonumber
\end{eqnarray}
Here, $\mathbb{P}^{m-1}$ and $\mathbb{P}^{n-1}$ are projective vector spaces over some
arbitrary field,
\begin{displaymath}
[x_0:x_1:\cdots:x_{n-1}]
\end{displaymath}
is the homogeneous coordinates of $x$, and similarly for $y$. The image of the map
is a variety, called \textit{Segre variety}, written as $\Sigma_{m-1,n-1}$.
\end{example}

What concerns us is that Segre variety represents the set of product states~\cite{Brody01, Miyake03, Heydari05}.

If $X$ is a topological space, we define the \textit{dimension of X}, denoted
by $\dim(X)$, to be the supremum of all integers $n$ such that there
exists a chain $Z_0\subset Z_1\subset \cdots\subset Z_n$ of n+1 distinct
irreducible closed subsets of X. The dimension of a quasi-projective variety is then defined according to the Zariski topology.

\vspace*{3ex}

\paragraph{Main results.}
With the notations introduced above, Problem~\ref{or-prob} can be
restated as follows:

\begin{problem}\label{mod-prob} Let Z denote the Segre variety $\Sigma_{m-1,n-1}$. Whether there exists a gate $\Phi\in U(mn)$ s.t.
$\Phi(Z)\cap Z=\emptyset$?\end{problem}

Note that $Z$ is the set of product states, and thus $\Phi(Z)$ is the set of states generated by gate $\Phi$ from product states. So, $\Phi(Z)\cap Z=\emptyset$ means that all states in $\Phi(Z)$ are entangled, and the above problem coincides with Problem~$\ref{or-prob}$.

We claim that the answer to the above question is
affirmative except for some degenerated cases. First, we consider the degenerated cases that $\min(m,n)\leq 2$ or $(m,n)=(3,3)$. The following theorem gives a negative answer for this case:

\begin{theorem}\label{thm:degenerate}
Given a bipartite quantum system $\mathcal{H_m\otimes H_n}$, where $\mathcal{H_m}$ and $\mathcal{H_n}$ are Hilbert spaces with dimension m and n respectively,
if $\min{(m,n)}\leq 2$ or $(m,n)=(3,3)$, then $\forall \Phi\in U(mn)$, we have
\begin{displaymath}
\Phi(Z)\cap Z\neq \emptyset
\end{displaymath}In other words, universal entangler does not exist.
\end{theorem}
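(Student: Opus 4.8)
The strategy is to regard $Z=\Sigma_{m-1,n-1}$ as a projective variety sitting inside $\mathbb{P}^{mn-1}$ and to invoke the projective dimension theorem: if $X,Y$ are projective varieties in $\mathbb{P}^N$ with $\dim X+\dim Y\ge N$, then $X\cap Y\neq\emptyset$ (more precisely, every irreducible component of $X\cap Y$ has dimension at least $\dim X+\dim Y-N$); this is classical, see~\cite{hartshorne77}. Under this approach the whole statement reduces to a dimension count in $\mathbb{P}^{mn-1}$, valid for \emph{every} $\Phi\in U(mn)$ simultaneously.

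First I would compute $\dim Z$. The Segre map $\sigma$ of Example~\ref{example:segre} is an isomorphism onto its image, so $Z$ is isomorphic to $\mathbb{P}^{m-1}\times\mathbb{P}^{n-1}$ and hence $\dim Z=(m-1)+(n-1)=m+n-2$. Next I would control $\dim\Phi(Z)$: any $\Phi\in U(mn)\subset GL(mn,\mathbb{C})$ descends to an automorphism of $\mathbb{P}^{mn-1}$ that is a homeomorphism for the Zariski topology, so $\Phi(Z)$ is again closed, it is irreducible by Fact~1, and it has the same dimension $m+n-2$ as $Z$, since dimension is defined purely through chains of irreducible closed subsets and is therefore a topological invariant.

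Then comes the arithmetic step. In $\mathbb{P}^{mn-1}$ we have $\dim Z+\dim\Phi(Z)=2(m+n-2)$, and the inequality $2(m+n-2)\ge mn-1$ rearranges to $(m-2)(n-2)\le 1$. A short case check shows that $(m-2)(n-2)\le 1$ holds precisely when $\min(m,n)\le 2$, or when $(m,n)=(3,3)$ (the boundary case, where both sides equal $mn-1=8$). In each of these cases the dimension theorem yields $Z\cap\Phi(Z)\neq\emptyset$, i.e.\ some point of $Z$ has the form $\Phi(\ket{\phi}\otimes\ket{\psi})$ while being itself a product state, so $\Phi$ is not a universal entangler; since $\Phi$ was arbitrary this is exactly the conclusion of the theorem.

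The only points that require genuine care are: (i) confirming that the passage from the linear $\Phi$ to the induced map on projective space really preserves closedness, irreducibility, and dimension, so that $\Phi(Z)$ is legitimately a projective variety of the expected dimension; and (ii) being attentive to the equality case $(m,n)=(3,3)$, where one must use the dimension theorem in its "$\ge$" form rather than the strict-inequality form. Everything else—the dimension of the Segre variety and the elementary inequality $(m-2)(n-2)\le1$—is routine, so there is no real analytic or combinatorial obstacle here; the content is the correct identification and application of the algebraic-geometry input.
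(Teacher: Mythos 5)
Your proposal is correct and follows essentially the same route as the paper: both compute $\dim Z = m+n-2$ for the Segre variety, observe that $\Phi(Z)$ is a projective variety of the same dimension, and apply the projective dimension theorem in $\mathbb{P}^{mn-1}$, where the condition $2(m+n-2)-(mn-1)=1-(m-2)(n-2)\geq 0$ holds exactly in the degenerate cases. Your additional care about the induced automorphism of $\mathbb{P}^{mn-1}$ preserving closedness, irreducibility, and dimension is a point the paper leaves implicit, but it is the same argument.
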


To prove the above theorem, we need the following:

\begin{lemma}\label{lemma:proj}[Projective Dimension Theorem]\label{lemma:dim}
Let $Y$, $Z$ be varieties of dimensions r, s in $\mathbb{P}^n$. Then every
irreducible component of $Y\cap Z$ has dimension $\geq r+s-n$.
Furthermore, if $r+s-n\geq 0$, then $Y\cap Z$ is nonempty.
\end{lemma}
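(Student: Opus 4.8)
\emph{Proof plan.} The strategy is to reduce the projective statement to its affine analogue by passing to affine cones, and then to obtain the affine dimension theorem by realizing an intersection as a pullback of the diagonal and cutting it down one hypersurface at a time. So the first thing I would do is set up the cone construction. Let $C(Y),C(Z)\subseteq A^{n+1}$ be the affine cones over $Y$ and $Z$; these are irreducible affine varieties of dimensions $r+1$ and $s+1$, and $C(Y)\cap C(Z)$ is the affine cone over $Y\cap Z$. Granting the affine dimension theorem (every irreducible component of the intersection of two irreducible varieties of dimensions $a$ and $b$ in $A^N$ has dimension $\geq a+b-N$), every component of $C(Y)\cap C(Z)$ has dimension $\geq (r+1)+(s+1)-(n+1)=r+s-n+1$. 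Since the dimension of a cone exceeds that of the associated projective set by exactly one, every component of $Y\cap Z$ has dimension $\geq r+s-n$. For the nonemptiness clause: the origin always lies in $C(Y)\cap C(Z)$, so this cone is nonempty; when $r+s-n\geq 0$ the component through the origin has dimension $\geq 1$, hence contains a point other than the origin, which descends to a point of $Y\cap Z$ in $\mathbb{P}^n$.

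\textbf{Reducing the affine theorem to hypersurface sections.} Next I would prove the affine statement invoked above via the ``diagonal trick'': the map $p\mapsto (p,p)$ identifies $Y\cap Z$ with $(Y\times Z)\cap\Delta$ inside $A^{2N}$, where $\Delta=\{(p,p):p\in A^N\}$ is the diagonal. Here $Y\times Z$ is irreducible of dimension $a+b$ (a product of irreducible varieties over $\mathbb{C}$ is irreducible, as $\mathbb{C}$ is algebraically closed), and $\Delta$ is cut out by the $N$ polynomials $x_i-y_i$, $1\leq i\leq N$. So it suffices to establish the following local statement: for an irreducible variety $W$ and a polynomial $f$, either $f$ vanishes identically on $W$, or $\{f=0\}\cap W$ is empty or has every irreducible component of dimension exactly $\dim W-1$. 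Intersecting $W=Y\times Z$ successively with the $N$ hypersurfaces $\{x_i-y_i=0\}$ and noting that each step lowers the dimension of every component by at most one yields the bound $\dim W-N=a+b-N$.

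\textbf{The hypersurface section.} This last assertion is Krull's principal ideal theorem (Hauptidealsatz): a minimal prime over a principal ideal in a Noetherian ring has height at most one, which translates geometrically into the statement that every component of $\{f=0\}\cap W$ has codimension at most one in $W$; and if $f$ does not vanish identically on the irreducible set $W$, then $\{f=0\}\cap W$ is a proper closed subset, so each of its components has dimension strictly less than $\dim W$, hence equal to $\dim W-1$. I would import this, together with the fact that the topological dimension of a variety coincides with the Krull dimension of its coordinate ring, from \cite{artin91,hartshorne77} rather than reprove it.

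I expect the main obstacle to be precisely the commutative-algebra core of the last step --- the principal ideal theorem and the agreement of geometric codimension with ring-theoretic height, which ultimately rest on Noether normalization. The cone reduction, the diagonal trick, and irreducibility of products are all formal once that algebraic backbone is available; a secondary technical point to be careful about is that all dimension-count identities (cone of $Y\cap Z$, $\dim C(Y)=\dim Y+1$, $\dim(Y\times Z)=\dim Y+\dim Z$) hold at the level of every irreducible component, not merely for the ambient varieties.
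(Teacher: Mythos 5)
Your argument is correct and is essentially the proof the paper itself relies on: the paper does not prove this lemma but cites Hartshorne, Chapter I, Theorem 7.2, whose proof is exactly your cone reduction to the affine dimension theorem (with the origin giving nonemptiness), the affine case being handled by the diagonal trick and Krull's principal ideal theorem as you outline. No gaps beyond the commutative-algebra inputs you explicitly import, which is consistent with the paper's own level of citation.
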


For the proof of Lemma~\ref{lemma:proj}, see~\cite{hartshorne77}, chapter
I, theorem 7.2. Then we can easily prove
Theorem~\ref{thm:degenerate}:

\begin{proof}
For any $\Phi\in U(mn)$, $\dim(Z)=m-1+n-1=m+n-2$. Then we have
$\dim(\Phi(Z))+\dim(Z)-\dim(Y)=2\dim(Z)-\dim(Y)=2(m+n-2)-(mn-1)=1-(m-2)(n-2)\geq
0$. So we have $\Phi(Z)\cap Z\neq \emptyset$
according to Lemma~\ref{lemma:proj}.
\end{proof}

We now turn to consider the general case where $\min(m,n)\geq 3$ and $(m,n)\neq (3,3)$.

The following lemmas are needed in the proof of our main theorem.
\begin{lemma}\label{lemma:dense}
U(k) is Zariski dense in GL(k,$\mathbb{C}$)
\end{lemma}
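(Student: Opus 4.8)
The plan is to establish the equivalent algebraic statement: every polynomial function on the space $M_k(\mathbb{C})\cong A^{k^2}$ of $k\times k$ complex matrices that vanishes on $U(k)$ must already vanish on all of $GL(k,\mathbb{C})$. Granting this, the Zariski closure of $U(k)$ in $A^{k^2}$ — by definition the common zero locus of all polynomials vanishing on $U(k)$ — contains $GL(k,\mathbb{C})$, and intersecting with $GL(k,\mathbb{C})$ shows that the closure of $U(k)$ inside $GL(k,\mathbb{C})$ is the whole group, which is exactly Zariski density.

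So let $f\in\mathbb{C}[x_{11},\ldots,x_{kk}]$ vanish on $U(k)$. First I would fix a unitary $U\in U(k)$ and a Hermitian matrix $H$ and study the single-variable function $\phi(t)=f\!\left(U e^{itH}\right)$. The entries of $e^{itH}=\sum_{n\geq 0}(it)^n H^n/n!$ are everywhere-convergent power series in $t$, so $\phi$ is an entire function of $t\in\mathbb{C}$; moreover, for real $t$ the matrix $itH$ is anti-Hermitian, hence $e^{itH}$ and therefore $Ue^{itH}$ lie in $U(k)$, so $\phi(t)=0$ for every $t\in\mathbb{R}$. By the identity theorem for holomorphic functions of one complex variable, $\phi\equiv 0$; evaluating at $t=-i$ yields $f\!\left(Ue^{H}\right)=0$. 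Thus $f$ vanishes on every product of a unitary matrix with a positive-definite Hermitian matrix.

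It then remains only to invoke the polar decomposition: any $g\in GL(k,\mathbb{C})$ can be written $g=UP$ with $U\in U(k)$ and $P$ positive-definite Hermitian, and $P=e^{H}$ for some Hermitian $H$. Hence $f(g)=f\!\left(Ue^{H}\right)=0$, so $f$ vanishes identically on $GL(k,\mathbb{C})$, which completes the argument.

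The one genuinely substantive step — and the point I would be most careful with — is the passage from ``$f$ vanishes on the compact real form $U(k)$'' to ``$f$ vanishes on its complexification $GL(k,\mathbb{C})$''. Analytic continuation along the one-parameter subgroups $t\mapsto e^{itH}$ is precisely the mechanism that effects this, and it is essential to carry the spectator unitary factor $U$ through the argument, since $\{e^{H}: H \text{ Hermitian}\}$ is only the positive-definite cone, whereas combining this with polar decomposition reaches every element of $GL(k,\mathbb{C})$. (Alternatively, one could note that $GL(k,\mathbb{C})$ is a connected, hence irreducible, algebraic group, so it would suffice to verify that the Zariski-closed subgroup $\overline{U(k)}$ has full dimension $k^2$; but the same analytic continuation is essentially what one needs to supply that dimension count, so the direct argument above is the cleaner route.)
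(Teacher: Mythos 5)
Your proof is correct, but it takes a genuinely different route from the paper's. You reduce density to the statement that every polynomial on $A^{k^2}$ vanishing on $U(k)$ vanishes on $GL(k,\mathbb{C})$, and prove this by analytic continuation: for fixed $U\in U(k)$ and Hermitian $H$, the entire function $t\mapsto f\left(Ue^{itH}\right)$ vanishes on $\mathbb{R}$, hence identically, and evaluating at $t=-i$ together with the polar decomposition $g=Ue^{H}$ covers all of $GL(k,\mathbb{C})$. This is the classical ``unitarian trick,'' whose one external input is the identity theorem from complex analysis. The paper instead argues directly with Zariski closures: it first shows $\overline{U(1)}=GL(1,\mathbb{C})$ because proper closed subsets of the affine line are finite while $U(1)$ is infinite; it then deduces that the closure of the unitary diagonal torus $U(1)^k$ is the full diagonal torus $H(k)$, which contains the positive diagonal matrices $A(k)$; it observes that left and right translation by a unitary preserves $\overline{U(k)}$; and it concludes from the singular value decomposition $GL(k,\mathbb{C})=U(k)A(k)U(k)$. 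Both arguments hinge on decomposing an invertible matrix into unitary and positive pieces (polar decomposition versus SVD), but the mechanism for passing from the compact group to its complexification differs: your analytic continuation along one-parameter subgroups versus the paper's reduction to the one-variable fact that a nonconstant polynomial has finitely many roots. Your version requires the matrix logarithm of a positive-definite matrix and a little complex analysis; the paper's stays entirely within the elementary algebraic-geometry toolkit set up in its preliminaries (in particular Example~\ref{example:A1}).
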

\begin{proof}
For the case $k=1$, the only Zariski closed subsets of
$G=GL(1,\mathbb{C})=\mathbb{C}\backslash \{0\}\cong \{(z_1,z_2)\in \mathbb{C}^2:z_1z_2=1\}$ are
$\emptyset$, $G$ and non-empty finite subsets of G. This follows
immediately from Example~\ref{example:A1}. Since $GL(1,\mathbb{C})$ is an
open subset of affine line $\mathbb{C}$, its closed subsets are intersections
of closed subsets of $\mathbb{C}$ and $\mathbb{C}\backslash \{0\}$. Now we see that $U(1)$ is infinite, thus
$\overline{U(1)}=G$.

In general, 
$H(k)=\{{\rm diag}(z_1,z_2,\cdots,z_k):z_1,z_2,\cdots,z_k\in \mathbb{C}\backslash \{0\}
\}
\cong (\mathbb{C}\backslash \{0\})^k$ is the Fibre product of $k$ copies of $\mathbb{C}\backslash
\{0\}$~\cite{hartshorne77}. Then $H(k)\cap
U(k)=\{{\rm diag}(z_1,z_2,\cdots,z_k):|z_1|=|z_2|=\cdots=|z_k|=1\}\cong
U(1)^k$.

By the results of the case $k=1$, we also have $\overline{H(k)\cap
U(k)}=H(k)\supseteq
A(k)=\{{\rm diag}(z_1,z_2,\cdots,z_k):z_1>0,z_2>0,\cdots,z_k>0\}$.

Now for any $B\in U(k)$, $L_B:X\rightarrow B\cdot X$ and
$R_B:X\rightarrow X\cdot B$ are two isomorphisms of G, and
$L_B(U(k))=U(k)=R_B(U(k))$. Thus
$L_B(\overline{U(k)})=\overline{U(k)}=R_B(\overline{U(k)})$.

Since $A(k)\subseteq \overline{H(k)\cap U(k)}\subseteq
\overline{U(k)}$, we have $U(k)A(k)U(k)\subseteq \overline{U(k)}$.

By singular value decomposition for $GL(k,\mathbb{C})$, we get
$U(k)A(k)U(k)=GL(k,\mathbb{C})\subseteq \overline{U(k)}$. Thus
$\overline{U(k)}=GL(k,\mathbb{C})$.
\end{proof}

The following lemma~\cite{tauvel05} establishes a connection between
the dimensions of domain and codomain of a variety morphism. A
morphism $\Phi:Z_1\rightarrow Z_2$ is called a \textit{dominant}
morphism if $\Phi(Z_1)$ is dense in $Z_2$.

\begin{lemma}\label{lemma:dim}
\noindent
\begin{enumerate}
\item[1]\label{lemma:dim1}
$Z_1$ and $Z_2$ are both irreducible varieties over $\mathbb{C}$, and
$\phi:Z_1\rightarrow Z_2$ is a dominant morphism, then $\dim(Z_2)\leq
\dim(Z_1)$. 
\item[2] \label{lemma:dim2} 
$Z_1$ and $Z_2$ are both
varieties over $\mathbb{C}$, and $\phi:Z_1\rightarrow Z_2$ is a morphism,
$r=\max\limits_{z\in Z_2}{\dim(\phi^{-1}(z))}$, then $\dim(Z_1)\leq
r+\dim(Z_2)$.
\item[3] \label{lemma:cover}
If $V=\cup_{i=1}^s V_i$ is a finite open covering, and $\forall i$, $V_i$ is irreducible, $\forall i,j$, $V_i\cap V_j\neq \emptyset$, then $V$ is irreducible.
\end{enumerate}
\end{lemma}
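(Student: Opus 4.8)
The plan is to prove the three parts essentially independently, since they combine three standard facts about morphisms and dimension. I would dispatch \emph{Part 3} first, as it is purely topological. Suppose toward a contradiction that $V=A\cup B$ with $A,B$ proper closed subsets. For each $i$ we have $V_i=(A\cap V_i)\cup(B\cap V_i)$, a union of two relatively closed sets, so irreducibility of $V_i$ forces $V_i\subseteq A$ or $V_i\subseteq B$. Not all $V_i$ can lie in $A$ (else $V=\bigcup_i V_i\subseteq A$, contradicting $A\subsetneq V$), so fix $V_{i_0}\subseteq B$ with $V_{i_0}\not\subseteq A$. For any $j$, the set $V_{i_0}\cap V_j$ is nonempty and open, hence dense, in the irreducible space $V_{i_0}$; if $V_j\subseteq A$ then $V_{i_0}\cap V_j\subseteq A\cap V_{i_0}$, which is closed in $V_{i_0}$ and therefore equals $V_{i_0}$, contradicting $V_{i_0}\not\subseteq A$. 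Hence $V_j\subseteq B$ for all $j$, so $V\subseteq B$, contradicting $B\subsetneq V$. Thus $V$ is irreducible.

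For \emph{Part 1}, I would use the function-field description of dimension. A dominant morphism $\phi\colon Z_1\rightarrow Z_2$ of irreducible varieties induces an inclusion of function fields $\mathbb{C}(Z_2)\hookrightarrow\mathbb{C}(Z_1)$ (a rational function on $Z_2$ pulls back, and pullback is injective because the image is dense). Since the dimension of an irreducible variety equals the transcendence degree of its function field over $\mathbb{C}$, and transcendence degree is monotone under field extension, $\dim Z_2=\operatorname{tr.deg}_{\mathbb{C}}\mathbb{C}(Z_2)\le\operatorname{tr.deg}_{\mathbb{C}}\mathbb{C}(Z_1)=\dim Z_1$. (One also invokes the standard equivalence between this notion of dimension and the chain-of-closed-subsets notion used in the paper.)

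\emph{Part 2} is the fiber-dimension theorem and is the part I expect to cost real work. First I would reduce to the irreducible dominant case: choose an irreducible component $W\subseteq Z_1$ with $\dim W=\dim Z_1$, set $Z_2'=\overline{\phi(W)}$, and restrict to the dominant morphism $\phi|_W\colon W\rightarrow Z_2'$; each fiber of $\phi|_W$ sits inside a fiber of $\phi$, so the maximum fiber dimension can only decrease, and $\dim Z_2'\le\dim Z_2$. It then suffices to show that for a dominant morphism $\psi\colon W\rightarrow W'$ of irreducible varieties, $\max_{w'\in W'}\dim\psi^{-1}(w')\ge\dim W-\dim W'$. I would prove this by induction on $\dim W'$: passing to an affine open reduces to the affine case; if $\dim W'>0$, pick a nonconstant regular function $f$ on $W'$, so that $Z(f)\subsetneq W'$ has a component $W''$ of dimension $\dim W'-1$ meeting $\psi(W)$, and a component of $\psi^{-1}(W'')$ dominates $W''$ and has dimension $\ge\dim W-1$ (because cutting by one nonzero equation drops dimension by at most one, via Krull's principal ideal theorem); then apply the inductive hypothesis. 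The base case $\dim W'=0$ is trivial.

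The substantive obstacle is \emph{Part 2}: making the dimension bookkeeping rigorous needs Krull's Hauptidealsatz (or Noether normalization together with the theory of the generic fiber), plus care in handling a reducible source and a non-dominant map. Parts 1 and 3 are routine once the transcendence-degree description of dimension and the irreducibility manipulations are available. Since all three are classical, one may alternatively simply cite \cite{tauvel05} and \cite{hartshorne77}; the sketch above records the content behind that citation.
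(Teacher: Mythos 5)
Your proposal is correct in substance, but it differs from the paper in a basic way: the paper gives no proof of this lemma at all, quoting all three parts as known results from \cite{tauvel05} (with the relaxation of the irreducibility hypotheses attributed to \cite{hartshorne77}), whereas you reconstruct the underlying arguments. Your Part 3 argument is complete and correct, and your Part 1 argument (a dominant morphism of irreducible varieties embeds function fields, and dimension equals transcendence degree over $\mathbb{C}$) is the standard one. In Part 2 the reduction to a dominant morphism $\psi\colon W\to W'$ of irreducible varieties is fine, but the inductive step needs two small repairs, which you partly acknowledge: first, to know that a suitable $f$ exists you should pick a point $w_0'=\psi(w_0)$ in the image and set $f=g-g(w_0')$ for some nonconstant regular $g$, which guarantees that $Z(\psi^*f)\ni w_0$ is nonempty (an arbitrary nonconstant $f$ can have $Z(f)$ disjoint from $\psi(W)$); second, the claim that some component of $\psi^{-1}(W'')$ dominates $W''$ is neither obvious nor needed --- it suffices to take any irreducible component $W_1$ of $Z(\psi^*f)$ of maximal dimension (Krull's Hauptidealsatz gives $\dim W_1\ge\dim W-1$), observe that $\overline{\psi(W_1)}\subseteq Z(f)$ has dimension at most $\dim W'-1$, and apply the induction hypothesis to $\psi|_{W_1}\colon W_1\to\overline{\psi(W_1)}$, whose fibers sit inside fibers of $\psi$. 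With these adjustments your sketch is a valid proof of the cited fiber-dimension inequality; what your route buys is self-containedness (and it makes explicit that only the inequality, not generic constancy of fiber dimension, is needed downstream in Lemma~\ref{lemma:closure}), while the paper's citation buys brevity at the cost of leaving the reader to consult \cite{tauvel05}.
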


Let $X=\{\Phi|\Phi\in GL(mn,\mathbb{C}),\Phi(Z)\cap Z\neq \emptyset\}$.
We are able to give an upper bound on the dimension of the closure of $X$ with respect to the Zariski topology.

\begin{lemma}\label{lemma:closure}
$\dim(\overline{X})\leq m^2n^2-(m-2)(n-2)+1$, where $\overline{X}$ is the Zariski
closure of $X$.
\end{lemma}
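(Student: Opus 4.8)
The plan is to realise $X$ as the projection of an incidence variety that fibres conveniently over $Z\times Z$, bound its dimension there, and then transport the bound back to $X$. Let $GL(mn,\mathbb{C})$ act on $\mathbb{P}^{mn-1}$ in the usual way, and form the incidence set
\[ W=\bigl\{(\Phi,p,q)\in GL(mn,\mathbb{C})\times Z\times Z:\Phi(p)=q\bigr\}. \]
If $p=[v]$ and $q=[w]$, the condition $\Phi(p)=q$ amounts to the vanishing of all $2\times 2$ minors of the $mn\times 2$ matrix $[\,\Phi v\mid w\,]$, so $W$ is a closed subvariety of $GL(mn,\mathbb{C})\times Z\times Z$. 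Let $\pi_1\colon W\to GL(mn,\mathbb{C})$ and $\pi_2\colon W\to Z\times Z$ be the two projections; by construction $\pi_1(W)=\{\Phi:\Phi(Z)\cap Z\neq\emptyset\}=X$.

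First I would bound $\dim W$ using $\pi_2$. Since $GL(mn,\mathbb{C})$ acts transitively on $\mathbb{P}^{mn-1}$, the morphism $\pi_2$ is surjective, and its fibre over $(p,q)=([v],[w])$ is $\{\Phi\in GL(mn,\mathbb{C}):\Phi v\in\mathbb{C}w\}$. The linear map $\Phi\mapsto\Phi v$ is surjective onto $\mathbb{C}^{mn}$, so ``$\Phi v$ lies on the line $\mathbb{C}w$'' is a linear condition of codimension $mn-1$; hence this fibre is a nonempty open subset of an affine space of dimension $m^2n^2-(mn-1)$ (nonempty because some invertible matrix sends $v$ to $w$), and is therefore irreducible of dimension $m^2n^2-mn+1$, independently of $(p,q)$. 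Using $\dim Z=m+n-2$ (as in the proof of Theorem~\ref{thm:degenerate}), so $\dim(Z\times Z)=2(m+n-2)$, the second part of Lemma~\ref{lemma:dim} gives
\[ \dim W\le\bigl(m^2n^2-mn+1\bigr)+2(m+n-2). \]

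Next I would push this down along $\pi_1$. Decompose $W=\bigcup_i W_i$ into irreducible components; then $\overline{X}=\overline{\pi_1(W)}=\bigcup_i\overline{\pi_1(W_i)}$, and for each $i$ the restriction $\pi_1|_{W_i}\colon W_i\to\overline{\pi_1(W_i)}$ is a dominant morphism of irreducible varieties (the target is irreducible, being the closure of the continuous image of the irreducible set $W_i$; cf.\ Fact~1). The first part of Lemma~\ref{lemma:dim} then yields $\dim\overline{\pi_1(W_i)}\le\dim W_i\le\dim W$, whence
\[ \dim\overline{X}\le\dim W\le\bigl(m^2n^2-mn+1\bigr)+2(m+n-2)=m^2n^2-(m-2)(n-2)+1, \]
which is the asserted bound.

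The step that requires genuine care is the fibre-dimension count for $\pi_2$, i.e.\ checking that the set of invertible matrices carrying one fixed line to another has dimension exactly $m^2n^2-mn+1$: this is the codimension count for ``one matrix column confined to a line,'' together with the nonemptiness observation needed for the lower bound. The only other point to watch is that $W$ and $\overline{X}$ need not be irreducible, which forces the passage to irreducible components before Lemma~\ref{lemma:dim} can be applied; the remaining manipulations are routine.
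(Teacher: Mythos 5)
Your proposal is correct and follows essentially the same route as the paper: your incidence variety $W$ is just the graph form of the paper's $F^{-1}(Z)\subseteq GL(mn,\mathbb{C})\times Z$, your $\pi_2$ plays the role of the paper's map $\Psi$ to $Z\times Z$ with the same fibre dimension $m^2n^2-mn+1$ (the paper identifies the fibre as a coset $g_2Tg_1^{-1}$ where you do a linear codimension count), and your $\pi_1$ is the paper's projection $P_1$, giving the identical bound $m^2n^2-(m-2)(n-2)+1$. The one small divergence is welcome: where the paper must appeal to a strengthened version of the dimension-comparison lemma (or to a separately cited proof that $F^{-1}(Z)$ is irreducible), you sidestep the irreducibility issue by decomposing $W$ into irreducible components and applying the lemma componentwise, which is a clean and self-contained fix.
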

\begin{proof}
We have a morphism $F:G\times Y\rightarrow Y$ which is just the left
action of $G$ on $Y$, defined by
\begin{displaymath} F(g,[w])=[g\cdot w]
\end{displaymath}
where $G=GL(k,\mathbb{C})$, $Y=\mathbb{P}^{mn-1}$, $k=mn$.

Let $y_0=(1,0,\cdots,0)^T$ be a column vector with $k$ entries. For any
given $y_1$, $y_2\in Y$, we choose $g_1$, $g_2\in
GL(k,\mathbb{C})$, such that $[g_1\cdot y_0]=[y_1]$ and $[g_2\cdot
y_0]=[y_2]$. Then we have
\begin{align*}
 &[g\cdot y_1]=[y_2] \\
\iff & [g g_1\cdot y_0]=[g_2\cdot y_0] \\
\iff & [g_2^{-1}gg_1\cdot y_0]=[y_0]
\end{align*}

From above observations, $F$ has the following property: for any $y_1$, $y_2\in Y$,
$F^{-1}(y_2)\cap \{G\times \{y_1\}\}\cong \{\left(\begin{array}{cc}
z_1& \alpha\\
0 & g'
\end{array}\right):z_1\in \mathbb{C}\backslash \{0\}, g'\in GL(k-1,\mathbb{C}), \alpha\in \mathbb{C}^{k-1} \ is \ a \ row \
vector.\}$. Hence $\dim(F^{-1}(y_2)\cap {G\times
\{y_1\}})=m^2n^2-(mn-1)$.

Let $P_1$, $P_2$ be projections of $G\times Y$ to $G$, $Y$ respectively.
Now we only look at $G\times Z\subseteq G\times Y$, to get
$F:G\times Z\rightarrow Y$. Then we have a characterization of X:
$X=P_1F^{-1}(Z)$. In fact,
\begin{align*}
& g\in X \\
\iff & g(Z)\cap Z\neq \emptyset  \\
\iff &\exists z_1, z_2\in Z, s.t. g(z_1)=z_2 \\
\iff &\exists z_1, z_2\in Z, s.t. (g, z_1)\in F^{-1}(z_2)\\
\iff &\exists z_2\in Z, s.t. g\in P_1F^{-1}(z)\\
\iff & g\in P_1 F^{-1}(Z)
\end{align*}
Let $\overline{X}\subseteq G$ be the Zariski
closure of $X$ in $G$, then $P_1: F^{-1}(Z)\rightarrow \overline{X}$ is a dominant morphism.

Furthermore, consider $\Psi: F^{-1}(Z)\rightarrow Z\times Z$ given by
\begin{displaymath}
\Psi(g,[z])=([z],[g\cdot z])
\end{displaymath}
For all $z_1$, $z_2\in Z$, we have $\Psi^{-1}(z_1,z_2)= (g_2 T
g_1^{-1}, z_1)$, where $T=\{\left(\begin{array}{cc}
z_0& \alpha\\
0 & g'
\end{array}\right):z_0\in \mathbb{C}\backslash \{0\}, g'\in GL(k-1,\mathbb{C}), \alpha\in \mathbb{C}^{k-1} \ is \ a \ row \
vector\}$, and $g_1, g_2\in GL(k,\mathbb{C})$, s.t. $g_1(y_0)=z_1$, $g_2(y_0)=z_2$.
$\Psi$ is a dominant morphism since $G$ acts transitively on $\mathbb{P}^{k-1}$. Then we obtain
\begin{align*}
 \dim(F^{-1}(Z))
\leq & \dim(T)+\dim(Z\times Z)\\
=& m^2n^2-mn+1+2\cdot \dim(Z)
\end{align*}

It is required in Lemma~\ref{lemma:dim1}.1 that varieties $Z_1$ and
$Z_2$ are irreducible, but we haven't proved $F^{-1}(Z)$ is irreducible. Actually, this condition can be weakened:
Lemma~\ref{lemma:dim1}.1 is still true for the more general case that
$Z_1$ and $Z_2$ are closed subsets of irreducible
varieties~\cite{hartshorne77}. Thus, we can fill out the gap and
apply this lemma. Indeed, the
irreducibility of $F^{-1}(Z)$ and $\overline{X}$ really holds, but the proof
is not easy (see~\cite{irreducible} for a brief proof ). From
Lemma~\ref{lemma:dim1}.1, we have
\begin{align*}
\dim(\overline{X})\leq &\dim(F^{-1}(Z))\\
\leq &
(m^2n^2-(mn-1))+2\cdot \dim(Z)\\
=& m^2n^2-(mn-1)+2(m+n-2)\\
=& m^2n^2-(m-2)(n-2)+1
\end{align*}
\end{proof}

With the above lemmas, we can now derive the main result:
\begin{theorem}\label{thm:general}
Given a bipartite quantum system $\mathcal{H_m\otimes H_n}$, where $\mathcal{H_m}$ and $\mathcal{H_n}$ are Hilbert spaces with dimensions $m$ and $n$ respectively, if  $\min{(m,n)}> 2$ and $(m,n)\neq(3,3)$, then there exists an
unitary operator $\Phi\in U(mn)$, s.t.
\begin{displaymath}
\Phi(Z)\cap Z=\emptyset
\end{displaymath}That is, $\Phi$ is a universal entangler.
\end{theorem}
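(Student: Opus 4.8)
}
The plan is to play the dimension bound of Lemma~\ref{lemma:closure} against the Zariski density of the unitary group from Lemma~\ref{lemma:dense}. First I would record the two elementary facts that make the bookkeeping work: the group $G=GL(mn,\mathbb{C})$ is the principal open subset of the affine space $A^{m^2n^2}$ on which the determinant does not vanish, so it is an irreducible variety of dimension $\dim(G)=m^2n^2$; and trivially $X\subseteq\overline{X}$. Thus it suffices to show that $\overline{X}$ is a \emph{proper} Zariski-closed subset of $G$, for then its complement is a nonempty Zariski-open set, and any $\Phi\in U(mn)$ lying in that complement satisfies $\Phi\notin X$, i.e. $\Phi(Z)\cap Z=\emptyset$, which is exactly the assertion that $\Phi$ is a universal entangler.

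Next I would do the small arithmetic check that isolates precisely the cases in the statement. Under the hypotheses $\min(m,n)\geq 3$ and $(m,n)\neq(3,3)$ one has $(m-2)(n-2)\geq 2$ (equality $(m-2)(n-2)=1$ occurs exactly at $(3,3)$, and $(m-2)(n-2)\leq 0$ occurs exactly when $\min(m,n)\leq 2$, both already handled in Theorem~\ref{thm:degenerate}). Feeding this into Lemma~\ref{lemma:closure} gives
\begin{displaymath}
\dim(\overline{X})\;\leq\; m^2n^2-(m-2)(n-2)+1\;\leq\; m^2n^2-1\;<\;\dim(G).
\end{displaymath}
Since $G$ is irreducible, a Zariski-closed subset of strictly smaller dimension cannot equal $G$; hence $\overline{X}\subsetneq G$ and $V:=G\setminus\overline{X}$ is a nonempty Zariski-open subset of $G$.

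Finally I would invoke Lemma~\ref{lemma:dense}: $U(mn)$ is Zariski-dense in $G=GL(mn,\mathbb{C})$, so it meets every nonempty Zariski-open subset of $G$, in particular $U(mn)\cap V\neq\emptyset$. Choosing any $\Phi\in U(mn)\cap V$, we have $\Phi\notin\overline{X}\supseteq X$, so $\Phi(Z)\cap Z=\emptyset$, completing the proof.

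I do not expect any genuine obstacle here: all the real work has already been carried out in Lemma~\ref{lemma:closure} — in particular the construction of the morphisms $F$ and $\Psi$, the fiber-dimension computations, and the use of (the extension to closed subsets of irreducible varieties of) the dominant-morphism dimension inequality together with the irreducibility of $F^{-1}(Z)$. Granting that estimate, what remains is the soft argument above, and the only point demanding care is verifying that the numerical condition $(m-2)(n-2)\geq 2$ is equivalent to "$\min(m,n)\geq 3$ and $(m,n)\neq(3,3)$", so that Theorems~\ref{thm:degenerate} and~\ref{thm:general} together cover all $(m,n)$ and yield Theorem~\ref{main}.
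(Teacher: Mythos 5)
Your proposal is correct and follows essentially the same route as the paper: both combine the dimension bound of Lemma~\ref{lemma:closure} (with the arithmetic observation that $(m-2)(n-2)\geq 2$ under the hypotheses) with the Zariski density of $U(mn)$ in $GL(mn,\mathbb{C})$ from Lemma~\ref{lemma:dense}. The only cosmetic difference is that you argue directly—$\overline{X}$ is a proper closed subset, so its nonempty open complement must meet the dense set $U(mn)$—whereas the paper phrases the same argument as a contradiction from the assumption $U(mn)\subset X$.
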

\begin{proof}
If $U(mn)\subset X$, then it follows that
$GL(mn,\mathbb{C})=\overline{U(mn)}\subset{\overline{X}}$ from
Lemma~\ref{lemma:dense}. And in this assumption, we also have
$\dim(\overline{X})\leq m^2n^2-(m-2)(n-2)+1 < m^2n^2 = \dim(GL(mn,\mathbb{C}))$
 from Lemma~\ref{lemma:closure}~\cite{further}. It's a contradiction.
So $U(mn)\not\subset X$, i.e. a unitary operator $\Phi\in U(mn)$
with $\Phi(Z)\cap Z= \emptyset$ exists.
\end{proof}

\vspace*{3ex}

\paragraph{Conclusion.}
In summary, it is shown that a universal entangler of bipartite
system exists except for the cases of $1\otimes n$, $2\otimes n$,
$m\otimes 1$, $m\otimes 2$ and $3\otimes 3$. So we have completely
determined when a universal entangler exists. It seems that the
method employed in this Letter can be extended to the multipartite
case. This extension will lead us to
explore the geometric structure of entangled states and other
related objects. Furthermore, we can consider some more practical
questions: (1) How to construct such a universal entangler for
$3\otimes 4$ bipartite system explicitly? (2) Given a universal entangler, what is the minimum entanglement it guarantees to output with respect to the definition of entanglement measure for pure states~\cite{Bennett96,Bennett97,Popescu97}? (3) Furthermore, what is the optimal universal entangler which maximized the minimally possible entanglement the entangler outputs. Intuitively, for a bipartite system of sufficiently large dimensions,a randomly chosen unitary operator seems to be a universal entangler with high probability. Nevertheless,
we failed to give a proof of this conjecture.

\vspace*{3ex}

\paragraph{Acknowledgements.}
We are thankful to the colleagues in the Quantum Computation and
Information Research Group of Tsinghua University for helpful discussions. And one of us (J.-X. Chen) enjoyed
delightful discussions with W. Huang, L. Xiao, Z. Q. Zhang and Q.
Lin. Jun Yu thank Professor X.-J. Tan
and Q.-C. Tian who ask him to study Hartshorne 's book. This work was partly supported by the Natural Science
Foundation of China(Grant Nos.60621062 and 60503001) and the Hi-Tech
Research and Development Program of China(863 project)(Grant
No.2006AA01Z102).

\bibliography{unitary}
\bibliographystyle{plain}
\end{document}